\documentclass[pra,twocolumn,aps,superscriptaddress]{revtex4-1}
\usepackage{amsmath,amsfonts,amssymb,amsthm,graphicx,bbm,enumerate,times,color}
\usepackage{mathtools,comment}
\usepackage{lipsum}
\bibliographystyle{naturemag}

\newtheorem{theorem}{Theorem}

\newtheorem{lemma}[theorem]{Lemma}

\newtheorem{observation}[theorem]{Observation}

\usepackage{hyperref}

\newcommand{\e}{{\rm e}}

\newcommand{\id}{\mathbbm{1}}

\newcommand{\ket}[1]{|#1\rangle}

\newcommand{\Tr}{\operatorname{tr}}
\newcommand{\tr}{\Tr}



\newcommand{\fu}{Dahlem Center for Complex Quantum Systems, Freie Universit{\"a}t Berlin, 14195 Berlin, Germany}
\newcommand{\PIK}{Potsdam Institute for Climate Impact Research, 14473 Potsdam, Germany}
\newcommand{\hu}{Department of Physics, Humboldt-Universit{\"a}t zu Berlin, 12489 Berlin, Germany}

\begin{document}
\title{Quantum thermodynamics with local control}

  \author{J. Lekscha}
    \affiliation{\fu}
      \affiliation{\PIK}
        \affiliation{\hu}

  \author{H. Wilming}
  \affiliation{\fu}
  \author{J. Eisert}
  \affiliation{\fu}
  \author{R. Gallego}
  \affiliation{\fu}

\begin{abstract}
We investigate the limitations that emerge in thermodynamic tasks as a result of having local control only over the components of a thermal machine. These limitations are particularly relevant for devices composed of interacting many-body systems. Specifically, we study protocols of work extraction that employ a many-body system as a working medium whose evolution can be driven by tuning the on-site Hamiltonian terms. This provides a restricted set of thermodynamic operations, giving rise to novel bounds for the performance of engines. Our findings show that those limitations in control render it in general impossible to reach Carnot efficiency; in its extreme ramification it can even forbid to reach a finite efficiency or finite work per particle. We focus on the 1D Ising model in the thermodynamic limit as a case study. We show that in the limit of strong interactions the ferromagnetic case becomes useless for work extraction, while the anti-ferromagnetic improves its performance with the strength of the couplings, reaching Carnot in the limit of arbitrary strong interactions. Our results provide a promising connection between the study of quantum control and thermodynamics and introduce a more realistic set of physical operations well suited to capture current experimental scenarios.  
\end{abstract}
\date{\today}
\maketitle

Recently, notions of  quantum thermodynamics, and in particular questions on how much 
work can be extracted in systems in which quantum effects are expected to be relevant, 
has received a lot of attention. Much focus has been put on 
understanding fundamental limits on the amount of work that can be extracted
from a single quantum system prepared in a state
out of thermal equilibrium. 
This research programme is two-pronged: On the one hand, there is an emphasis on
identifying the laws of quantum thermodynamics \cite{Skrzypczyk2010,Negative,Abergtrullywork,Machines,
Secondlaw,ResourceTheory,Negative}, as primitives from which macroscopic thermodynamics can be derived.
On the other hand, a significant body of literature is concerned with 
characterising the behaviour of realistic physical devices 
operating at a scale where quantum effects become relevant \cite{Alicki79,Kosloff84,Gemmer,ControlGelbwaser,Machines}. 

One of the key aspects in these efforts 
is to understand how quantum thermodynamic notions precisely 
behave under composition of subsystems. This comprises the study of the role of correlations between subsystems \cite{Oppenheim02,AlickiFannes13,Acinentanglement,Perarnau15}, 
of the scalability of quantum engines \cite{Campisi16,Campisi16b,Zheng15}, and of 
the emergence of thermodynamics from a more fundamental quantum description of its constituents \cite{Janzing00,Gemmer}. This body of literature focuses on composite systems that, although often displaying classical or even quantum correlations between its subsystems reflecting a past interaction, do 
not interact, or at least not beyond the weak-coupling regime. 

In this work, we contribute to filling 
this gap by focusing on the study of work extraction with many-body systems with possibly strong couplings between subsystems. 
An important question that emerges when dealing with such strongly interacting systems is that of determining the possible transformations that one can induce in the state of the compound by having \emph{local control} only \footnote{If the subsystems do not interact -- or they do it in the weak-coupling regime -- the set of possible dynamics is trivially given by products of local unitaries. In this regime the limitations of local control are not present.}. A reasonable setting
for a many-body system is one where the experimenter will be able to apply and vary external fields that will control the on-site Hamiltonian terms; at the same time, the interaction terms between the subsystems cannot be modified at will. 
This constitutes a limitation on the set of reachable Hamiltonians and consequently on the possible dynamics that the system may undergo. This is a most natural setting: The field of \emph{quantum control} (QC) can be seen
as largely studying the type of dynamics precisely in such a setting \cite{Lloyd1995,Ramakrishna1995,Albertini2002,Janzing2002,Lloyd2004,Burgarth2009}. Here, 
we explore the surprising ramifications of local control for the performance of thermodynamic tasks. We believe that the
identification of this physically reasonable class of thermodynamic state transformations constitutes an
important aspect of this work in its own right.

We introduce interactions and limitations on control into the problem of work extraction by considering the situation of an engine operating with a many-body system as a working medium and two thermal baths at different temperatures. The working medium has some fixed interactions of arbitrary strength among its constituents. The engine is operated by applying some time-dependent external fields and putting the working medium in contact with either of the baths. For 
this general scenario, we investigate the limitations emerging due to the lack of global control, simply by comparing with the usual bounds provided by the second law. Those limitations will affect the efficiency of the engine as a function of the interactions and the size of the many-body system. 

As a first result, we find a fully general expression describing the corrections to the Carnot efficiency as a function of the interactions, showing that it is impossible in general to achieve Carnot efficiency exactly and interactions lead to 
irreversibility in the thermodynamic sense. Also, by employing results from the theory of QC, 
we show that our bounds are saturated for generic interactions. We then elaborate on bounds for the 1D Ising model as a case study. Surprisingly, this model displays a strikingly different behaviour for the anti-ferromagnetic and the ferromagnetic regimes. The former case allows for a finite work output per particle, as well as reaching Carnot efficiency in the limit of very strong couplings. The latter displays an opposite behaviour, where very strong couplings imply vanishing work per particle and efficiency. This shows that limitations due to local control crucially affect the scalability or performance in the macroscopic limit. Indeed, ranging from the two extreme behaviours of allowing for Carnot efficiency or preventing one to extract any work whatsoever.

\emph{Set-up and operations considered.} 
We consider a thermal machine composed by a working medium and two baths at different temperatures. 
The working medium is taken to be a many body system composed of $N$ subsystems. 
The machine is operated by performing two kinds of operations. 

Firstly, one can \emph{change the Hamiltonian} of the working medium over time. The working medium is hence described at time $t$ by the pair $(\rho(t),H(t))$ of a quantum state and a time-dependent Hamiltonian
\begin{equation}\label{eq:generalhamiltonian}
H(t)=H_{\text{ext}}(t)+ H_{\text{int}}.
\end{equation}
The term $H_{\text{ext}}(t)$ represents the external fields 
\begin{equation}
H_{\text{ext}}(t) =\sum_j^N H^{(j)}(t)
\end{equation}
that can be \emph{varied with time},
where $H^{(j)}=\id_1 \otimes \cdots \otimes h^{(j)}(t) \otimes \cdots \otimes \id_N $ is a Hamiltonian acting on the $j$-th subsystem only. Clearly, it is of physical relevance to consider the special case in which $h^{(j)}(t)=h(t)$ $\forall j$, that is, where the external fields act equally in all subsystems. This will be indeed the case considered in our case studies, but we keep here the discussion as general as possible. 
$H_{\text{int}}$, in contrast, is an arbitrary \emph{time-independent interaction} between the subsystems. The form of the interactions between the constituents of the working medium will be crucial in this step, because they shape the limitations on the set of Hamiltonians that can be chosen. This limitation is the most natural when dealing with many-body systems that can be affected by controlled external fields, although their interactions are not accessible to the experimenter. 
The time evolution between times $t_1$ and $t_2$ of the working medium under Hamiltonian \eqref{eq:generalhamiltonian} results in a transition 
\begin{equation}\label{eq:timeevolution}
\big(\rho(t_1),H(t_1)\big) \mapsto \big(U(t_1,t_2)\rho(t_1)U^{\dagger}(t_1,t_2), H(t_2)\big)  
\end{equation}
where $U(t_1,t_2)$ is the unitary evolution induced by the time-dependent Hamiltonian \eqref{eq:generalhamiltonian}. This transition results in a change of the total mean energy of the working medium, which is the 
\emph{expected work} $W$ extracted in the process, so that
\begin{equation}
W(t_1,t_2)=\tr \left(\rho(t_1) H (t_1)\right) - \tr \left(\rho(t_2) H(t_2)\right),
\end{equation}
where $\rho(t_2)=U(t_1,t_2)\rho(t_1)U^{\dagger}(t_1,t_2)$.  Given that the current value of the time employed will not be relevant for work and efficiency considerations, we can just describe the processes by 
\begin{eqnarray}
W^{i,i+1}&:=& W(t_i,t_{i+1}),\,\,\,
(\rho^i,H^i):= (\rho(t_i),H(t_i)).
\end{eqnarray} 

Secondly, we will consider another kind of operations that represent the \emph{thermal contact} between the working system and a thermal bath at inverse temperature $\beta$. We will assume that throughout the protocol there are two different baths available, one hot bath and one cold bath with inverse temperatures given by $\beta_h$ and $\beta_c$ respectively. These operations have the effect of bringing the working medium to the Gibbs state of the corresponding Hamiltonian. That is,
\begin{equation}\label{eq:thermalcontact}
(\rho^i,H^i) \mapsto (\omega(H^i,\beta),H^i),
\end{equation}
where $\omega(H^i,\beta)=\exp(-\beta H^i)/Z_{\beta}(H^i)$ and the partition function is given by $Z_{\beta}(H^i)=\tr(\exp(-\beta H^i))$. To simplify the notation we will simply denote Gibbs states by $\omega_{h}^{i}:=\omega(H^i,\beta_h)$ and equivalently for the cold bath. The transformation given by \eqref{eq:thermalcontact} occurs when placing a sufficiently weak interaction between the working medium and the thermal bath and does not require any work investment. 
These two kinds of steps are repeated at will to perform a protocol. 


\emph{Engine cycles.} The two kinds of operations described above -- time evolution under the time-dependent Hamiltonian \eqref{eq:generalhamiltonian} and thermal contacts with the two baths resulting in \eqref{eq:thermalcontact} -- are combined arbitrarily in a protocol yielding a total expected work given by 
\begin{equation}
W=\sum_{i}^{n}W^{i,i+1},
\end{equation}
where $i$ sums over all the steps in which a time evolution under the time-dependent Hamiltonian has occurred. The protocol is applied cyclicly so that after the $n$ steps that change the Hamiltonian we return to the initial Hamiltonian, that is, $H^n=H^0$. 
During the contacts with the thermal baths resulting in \eqref{eq:thermalcontact} the bath and working medium exchange heat. The heat provided by the hot thermal bath is given by
\begin{equation}\label{eq:heatdef}
Q_h=\sum_{i}^k Q^{i}_h=\sum_{i}^k  \left(\tr(\omega_{h}^i H^i) - \tr(\rho^i H^i)\right),
\end{equation}
where $i$ sums over the steps of the protocol where a thermal contact with the hot thermal bath is implemented. 
Finally, the efficiency of the engine performing a given cycle is defined as
\begin{equation}
\eta= \frac{W}{|Q_h|}.
\end{equation}
We will now study limitations on the maximal efficiency achievable given as a function of the interaction term $H_{\text{int}}$.

\emph{Limitations to Carnot efficiency.}  As it is clear from basic considerations in (quantum) thermodynamics, 
the optimal efficiency is reached by reversible protocols. This can be easily appreciated in the Carnot cycle as depicted in Fig \ref{fig:carnotcycle}. Within the framework of phenomenological thermodynamics the working medium (say, a 
gas in a piston) is described by its entropy and temperature. It is necessary, in order to perform a Carnot cycle, that an adiabatic compression/expansion of the gas in the piston can alter its temperature at will within the range given by the two baths. That is, if one has a gas at temperature $T_c$ (after contact with the cold bath) one can compress rapidly the piston to increase its temperature to $T_h$, the latter being the temperature of the hot bath. The temperature will increase monotonically with the strength of the compression. Hence, in order to reach $T_h$ one only needs to compress the gas sufficiently.  
\begin{figure}
\includegraphics[width=.82\linewidth]{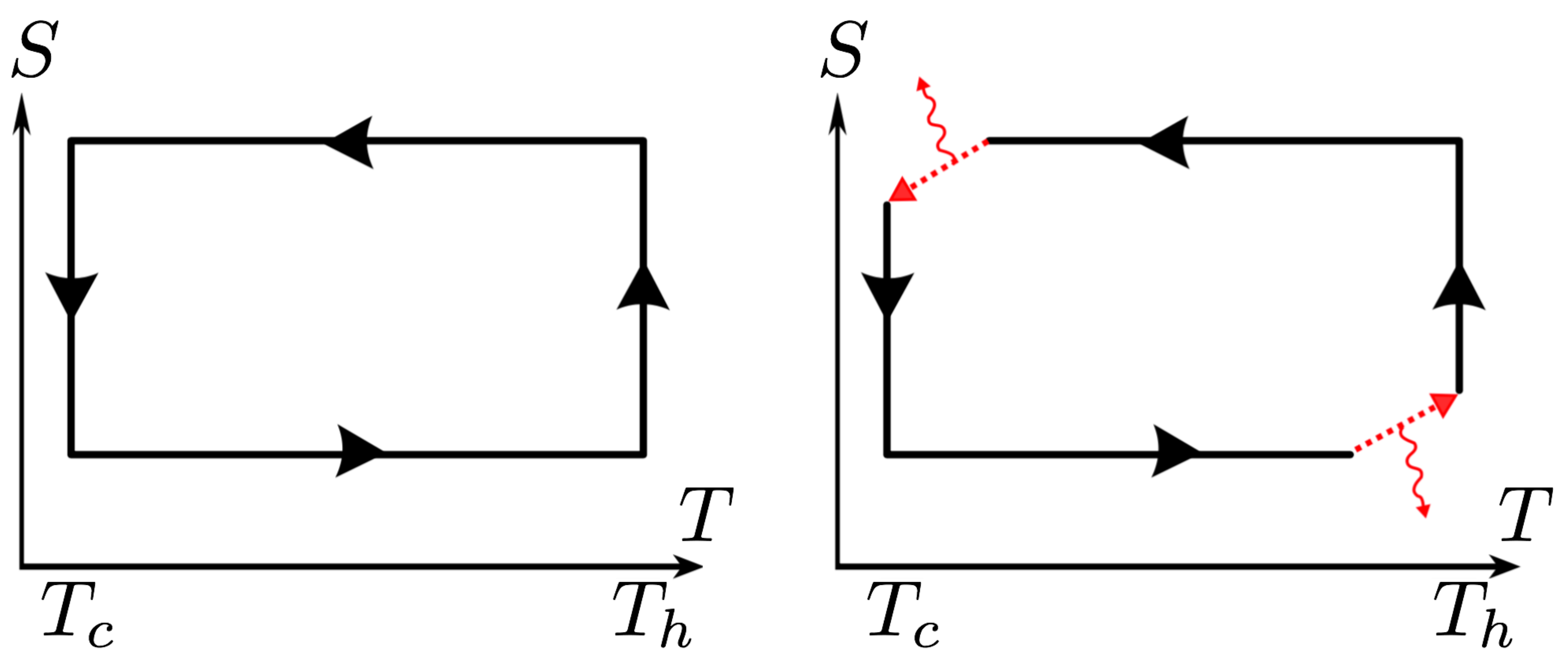}
\caption{\textbf{a}) A Carnot cycle as formulated within the framework of phenomenological thermodynamics in an entropy-temperature diagram. During the adiabatic compression/expansion (horizontal) the temperature of the working medium changes in such a way that when it is put in contact with the heat bath it is already at the same temperature as the bath. Hence, the thermal contact that initiates the isothermal has no effect on the working medium. No heat flows from the heat bath until one starts the isothermal expansion. \textbf{b}) Suppose that by some technical limitation the temperature of the bath cannot be reached by an adiabatic expansion/compression. Then when the working medium and the baths are put in contact, there is an unavoidable dissipation reducing the efficiency, illustrated by the red arrows.}
\label{fig:carnotcycle}
\end{figure}

The idealisation of a Carnot engine is similar when we deal with a microscopic working medium. In this case, it will not be described by the coarse-grained variables entropy and temperature, but with the pair $(\rho^i,H^i)$ of the 
quantum state $\rho^i$ and the Hamiltonian $H^i$, taking different configurations over the protocol. The diagram of this state space is depicted in Fig.\ \ref{fig:carnotcyclequbit}. 

\begin{figure}
\includegraphics[width=.98\linewidth]{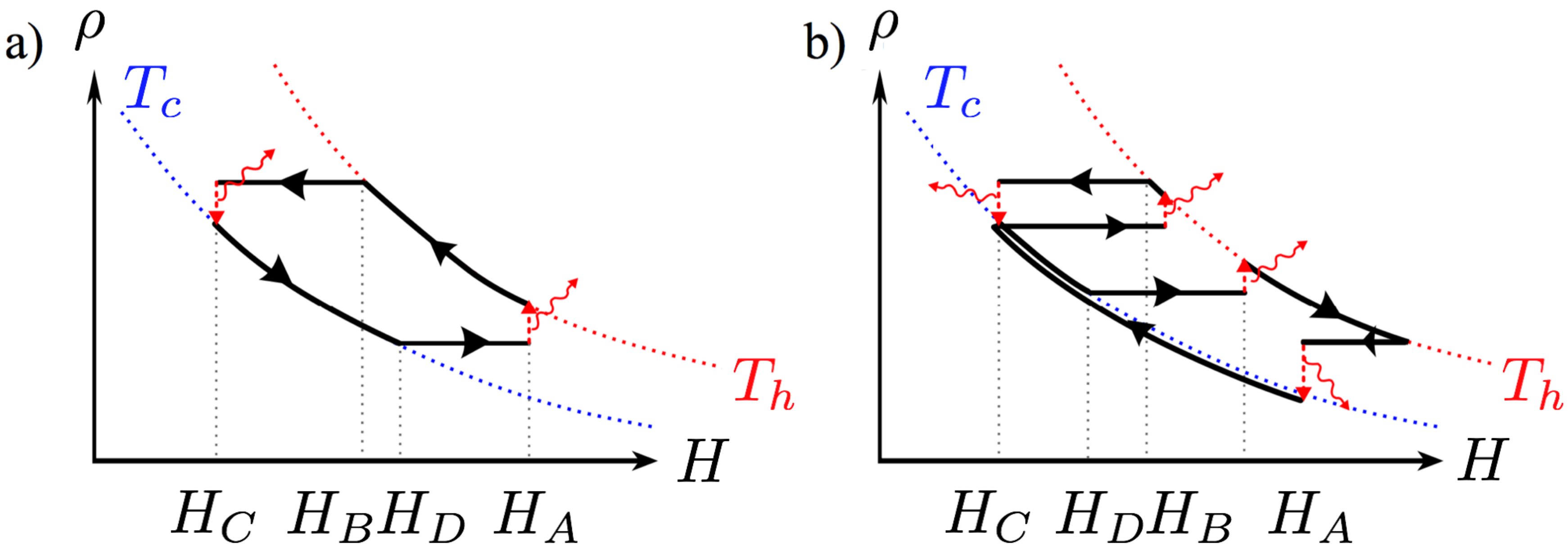}
\caption{{\bf a)} A Carnot-like protocol for a quantum system. {\bf b)} A protocol that we do not call a Carnot-like protocol. 
The red (blue) dashed lines depict the thermal states at the hot (cold) bath temperature.}
\label{fig:carnotcyclequbit}
\end{figure}

Similarly to the usual Carnot cycle of Fig.\ \ref{fig:carnotcycle}, maximal efficiency is achieved when the protocol is reversible. This requires that the working medium, after contact with the cold heat bath in state $\omega^i_c(H)$ can be transformed by an adiabatic process -- in this case one of the form Eq.\ \eqref{eq:timeevolution} -- into a state $\omega^i_h(H')$. In other words, one must be able to ``compress'' the cold working medium until it reaches the temperature of the hot bath. This is always possible for a gas in a piston, but as we will show \emph{it is not possible} for many-body systems evolving under Hamiltonians of the form \eqref{eq:generalhamiltonian}. This insight constitutes the main result of this manuscript and it is responsible for the impossibility of reaching Carnot efficiency, which is captured precisely in the following set of results.


\emph{General bound.} We will first consider the problem in its full generality and give an upper bound to the efficiency that can be obtained by a protocol that combines operations \eqref{eq:timeevolution} and \eqref{eq:thermalcontact}.

\begin{theorem}[General bound]\label{thm:general-results}
All protocols that combine operations \eqref{eq:timeevolution} and \eqref{eq:thermalcontact} have an efficiency bounded as
\begin{align}\label{eq:mainbound}
\eta &\leq 1 - \dfrac{T_c}{T_h} \left(\dfrac{\Delta S^{B,D} + \min_{U} D(U\omega_{h}^{B}U^{\dagger}\|\omega_{c}^{C}) }{\Delta S^{B,D} -  \min_{V}D(V\omega_{c}^{D}V^{\dagger}\|\omega_{h}^{A})}\right)
\end{align}
where $\omega_{i}^J:=\omega(H^J,\beta_i)$ and $H^J$ are arbitrary Hamiltonians of the form \eqref{eq:generalhamiltonian}; $\Delta S^{B,D}:=S(\omega_h^B)-S(\omega_c^D)$ where $S$ is the von Neumann entropy, $D(\cdot \| \cdot)$ is the relative entropy; $U$ and $V$ are unitary transformations that can be induced by any trajectory of $H(t)$ as in \eqref{eq:generalhamiltonian}.
\end{theorem}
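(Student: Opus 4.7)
The plan is to analyze the optimal protocol as a Carnot-like cycle with four corners $H^A,H^B,H^C,H^D$ and to track where the restricted form (1) forces dissipation. I would take $A\to B$ to be a reversible isothermal stage in contact with the hot bath (state tracks $\omega_h$ of the instantaneous Hamiltonian), $B\to C$ a unitary $U$ evolving $H^B\to H^C$, $C\to D$ a reversible isothermal stage with the cold bath, and $D\to A$ a unitary $V$ evolving $H^D\to H^A$. The crucial observation is that equation (1) generally prevents $U\omega_h^B U^\dagger=\omega_c^C$ and $V\omega_c^D V^\dagger=\omega_h^A$; the two relative entropies in the bound quantify precisely this mismatch.

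The main technical tool is the identity $D(\rho\,\|\,\omega(H,\beta))=S(\omega)-S(\rho)-\beta Q$, obtained by writing $\log\omega=-\beta H-\log Z$, where $Q=\tr((\omega-\rho)H)$ is the heat absorbed during the contact $\rho\mapsto\omega(H,\beta)$. Applied at the cold contact immediately after $U$, unitary invariance of $S$ gives $\beta_c Q_c^C=S(\omega_c^C)-S(\omega_h^B)-D(U\omega_h^B U^\dagger\,\|\,\omega_c^C)$; at the hot contact after $V$, $\beta_h Q_h^A=S(\omega_h^A)-S(\omega_c^D)-D(V\omega_c^D V^\dagger\,\|\,\omega_h^A)$. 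For the reversible isothermal branches, the state tracks the instantaneous Gibbs state so the relative-entropy term vanishes and $\beta Q$ reduces to the entropy difference of the endpoints. Summing the hot contributions telescopes $S(\omega_h^A)$ and leaves $\beta_h Q_h=\Delta S^{B,D}-D_V$; summing the cold contributions analogously telescopes $S(\omega_c^C)$ and leaves $\beta_c Q_c=-\Delta S^{B,D}-D_U$. Cyclicity gives $W=Q_h+Q_c$, and $\eta=1+Q_c/Q_h$ reduces algebraically to the ratio in the theorem statement (without the minima). Since $\eta$ is monotonically decreasing in both $D_U$ and $D_V$, replacing them by their minima over admissible unitaries yields the stated upper bound.

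The step I expect to be most delicate is justifying that this four-corner, reversible-isothermal structure really upper-bounds \emph{all} protocols combining (3) and (5), not only idealized Carnot cycles. The plan for that reduction is to group consecutive same-bath contacts into an isothermal stage upper-bounded by its reversible limit via a standard refinement argument (each sub-step contributes a nonnegative relative-entropy loss that vanishes in the quasi-static limit), and to merge consecutive unitary steps between bath switches into a single effective unitary of the form generated by (1); the only irreducible dissipation then lives in the two bath-switching transitions and is exactly captured by $D_U$ and $D_V$. A minor caveat is that the bound is meaningful only when $\Delta S^{B,D}-\min_V D(V\omega_c^D V^\dagger\,\|\,\omega_h^A)>0$, which is automatic in any regime where the engine extracts net positive work from the hot bath.
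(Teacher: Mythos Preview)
Your proposal is correct and follows essentially the same route as the paper's proof: both identify the four corners $H^{A,B,C,D}$, bound the heat on the hot and cold segments by $\beta_h Q_h \le \Delta S^{B,D} - D_V$ and $\beta_c Q_c \le -\Delta S^{B,D} - D_U$, and then argue (somewhat informally in both cases) that protocols with additional bath switches only accumulate further nonnegative relative-entropy penalties. The only cosmetic difference is that the paper imports the isothermal work/heat bound from Theorem~1 of Ref.~\cite{Wilming2016}, whereas you derive it directly from the identity $D(\rho\|\omega_\beta(H))=S(\omega_\beta(H))-S(\rho)-\beta\,\tr\big((\omega_\beta(H)-\rho)H\big)$; this makes your argument slightly more self-contained but otherwise identical.
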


The proof is presented in detail in Appendix \ref{sec:app:penalty}. It  also shows how to construct for any $U,V$ and Hamiltonians $H^{A,B,C,D}$ a protocol that actually saturates the bound. This protocol is what we call a Carnot-like protocol, which contains one isothermal path with each of the baths and two adiabatic operations of the form \eqref{eq:timeevolution}, but differs from a conventional Carnot protocol in the fact that there is an unavoidable dissipation when initiating the isothermals (see Fig.~\ref{fig:carnotcyclequbit})

It is also worth discussing simplified bounds on the efficiency that can be also saturated in two relevant regimes. First, note that if $[H_\text{ext}(t),H_\text{int}]=0$ $\forall t$, then any unitary $U$ generated by a trajectory of $H(t)$ will be such $[U,\omega^J_i]=0$. In this case \eqref{eq:mainbound} is replaced by
\begin{align}\label{eq:mainbounddiagonal}
\eta &\leq 1 - \dfrac{T_c}{T_h} \left(\dfrac{\Delta S^{B,D} +  D(\omega_{h}^{B}\|\omega_{c}^{C}) }{\Delta S^{B,D} -  D(\omega_{c}^{D}\|\omega_{h}^{A})}\right).
\end{align} 
This latter bound applies also if one considers arbitrary Hamiltonians $H(t)$ but limits instead the set of operations \eqref{eq:timeevolution} to Hamiltonian quenches, where $U(t_1,t_2)=\id$.

Secondly, consider the case where a trajectory of $H(t)$ can induce any possible global unitary transformations $U$ and $V$. In this case the bound \eqref{eq:mainbound} is replaced by 
\begin{align}\label{eq:mainboundany}
\eta &\leq 1 - \dfrac{T_c}{T_h} \left(\dfrac{\Delta S^{B,D} +  D^\downarrow(\omega_{h}^{B}\|\omega_{c}^{C}) }{\Delta S^{B,D} -  D^\downarrow(\omega_{c}^{D}\|\omega_{h}^{A})}\right)
\end{align} 
where $D^\downarrow(\cdot \| \cdot )$ is the relative entropy defined as
$D^\downarrow(\rho||\sigma):=\sum_{m}\rho_m \ln({\rho_m}/{\sigma_m})$,
with $\{\rho_m\}$ and $\{\sigma_m\}$ being the set of eigenvalues of $\rho$ and $\sigma$ respectively, both ordered in non-increasing order. The bound \eqref{eq:mainboundany} follows from majorization arguments (see Appendix \ref{sec:app:relent}) and its a universal bound on the efficiency (it is larger than the r.h.s. of \eqref{eq:mainbound}). We will also show further in this manuscript that this universal bound can be achieved when having generic interactions.
Indeed, Theorem \ref{thm:general-results} allows us to recover the usual Carnot efficiency in the case of vanishing interactions, since in this case the correction terms can be made zero by appropriate choice of local fields.
\begin{observation}[Vanishing interactions]
In the case of vanishing interactions, that is $H_{\text{int}}=0$ in \eqref{eq:generalhamiltonian} it is possible to achieve Carnot efficiency 
\begin{equation}\label{eq:carnot efficiency}
\eta_c=1- \frac{T_c}{T_h}.
\end{equation}
\end{observation}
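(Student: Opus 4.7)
The observation is essentially a direct consequence of Theorem~\ref{thm:general-results}. The plan is to show that when $H_{\text{int}}=0$, one can choose the four Hamiltonians $H^{A,B,C,D}$ appearing in the general bound \eqref{eq:mainbound} so that both relative-entropy correction terms vanish identically, reducing the bound to $\eta \le 1 - T_c/T_h$; and then to invoke the achievability clause accompanying the theorem (the Carnot-like protocol constructed in Appendix~\ref{sec:app:penalty}) to conclude that this value is actually attained.

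The key observation is that $D(\rho\|\sigma) = 0$ iff $\rho = \sigma$, so the task reduces to making $\omega_c^C = \omega_h^B$ and $\omega_h^A = \omega_c^D$ by a suitable choice of Hamiltonians, with $U = V = \id$ (which is always a trajectory of $H(t)$). Concretely, I would take a single local Hamiltonian $h$ and set $h^{(j),B} = h$ for all $j$, so that $H^B = \sum_j h^{(j),B}$, and then define $H^C := (\beta_h/\beta_c) H^B$. Since $H_{\text{int}}=0$, both $H^B$ and $H^C$ are admissible Hamiltonians of the form \eqref{eq:generalhamiltonian} (the local control is unrestricted), and by the scaling property of Gibbs states
\begin{equation}
\omega_c^C \;=\; \frac{e^{-\beta_c H^C}}{Z_{\beta_c}(H^C)} \;=\; \frac{e^{-\beta_h H^B}}{Z_{\beta_h}(H^B)} \;=\; \omega_h^B,
\end{equation}
so that $D(\omega_h^B \| \omega_c^C) = 0$. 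The analogous choice $H^A := (\beta_c/\beta_h) H^D$ yields $D(\omega_c^D \| \omega_h^A) = 0$.

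Plugging these vanishing corrections into \eqref{eq:mainbound} (with $U=V=\id$, which is trivially allowed since the minima are already attained by $\id$), the ratio in parentheses collapses to $\Delta S^{B,D}/\Delta S^{B,D}=1$, giving $\eta \le 1 - T_c/T_h = \eta_c$. Combining with $\eta \le \eta_c$ from the second law established in the usual way, and noting that the construction in Appendix~\ref{sec:app:penalty} produces an explicit Carnot-like protocol saturating \eqref{eq:mainbound} for any admissible choice of $U,V$ and $H^{A,B,C,D}$, we conclude that the Carnot value $\eta_c$ is actually reached.

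There is essentially no obstacle here: the only subtle point is verifying that the chosen Hamiltonians lie in the admissible class \eqref{eq:generalhamiltonian}, which is automatic when $H_{\text{int}}=0$ because any rescaling of a sum of on-site terms is again a sum of on-site terms. One should emphasise that this freedom fails in the interacting case: rescaling $H(t)$ in general rescales $H_{\text{int}}$ as well, which is not allowed, and this is precisely what prevents the simultaneous vanishing of both correction terms in the presence of interactions.
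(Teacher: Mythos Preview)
Your proof is correct and follows essentially the same route as the paper: choose $H^C=(\beta_h/\beta_c)H^B$ and $H^A=(\beta_c/\beta_h)H^D$ so that $\omega_c^C=\omega_h^B$ and $\omega_h^A=\omega_c^D$, take $U=V=\id$, and observe that both relative-entropy corrections vanish, reducing \eqref{eq:mainbound} to the Carnot value, which is then achieved by the saturating protocol. The paper's version is terser and phrases the Gibbs states as tensor products $\bigotimes_i \omega(H_i^B,\beta_h)$, but the argument is the same.
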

This follows simply from Theorem \ref{thm:general-results}, since 
\begin{equation}
\omega_h^B:=\bigotimes_i \omega(H^B_i,\beta_h)=\bigotimes_i \omega(H^C_i,\beta_c):=\omega_c^C
\end{equation} 
can be satisfied by taking simply $H_i^C=({\beta_h}/{\beta_c}) H_i^B$, and equivalently for $\omega_h^D$ and $\omega_h^A$. By choosing $U=V=\id$ in \eqref{eq:mainbound} we obtain the Carnot efficiency \eqref{eq:carnot efficiency}. 
%
Finally, we note that the correction terms in the optimal efficiency scale extensively for local many-body systems. Hence, a similar bound holds when we consider for the efficiency the work-density and heat-density instead of the total work and total heat.

\emph{Saturating the bound for generic interactions.} Previously we have seen that the maximal possible value of the efficiency as a function of the interactions $H_\text{int}$ is given by \eqref{eq:mainboundany}. It is a natural task 
to establish conditions where it can be saturated. 
Here we argue that the \emph{bound \eqref{eq:mainboundany} can be saturated generically.}
For this, we rely on results in the field of \emph{quantum control} showing that under a Hamiltonian of the form \eqref{eq:generalhamiltonian}, any unitary in the Lie-algebra generated 
by $H_{\rm int}$ and the locally controllable fields $H^{(i)}(t)$ can be approximated arbitrarily well \cite{Lloyd1995,Ramakrishna1995,Albertini2002,Janzing2002,Lloyd2004,Burgarth2009}. For generic, locally interacting Hamiltonians, this Lie-algebra is the full special unitary Lie-algebra on the Hilbert-space and thus any global unitary can in principle be approximated arbitrarily well.  Indeed, often one does not even need to control the on-site field of all the spins. For example, in a spin-chain with Heisenberg-like interactions, control over a \emph{single} spin is in principle sufficient to implement any unitary evolution \cite{Janzing2002,Burgarth2009}.

\emph{Case study: Ising model.} For the remainder of this paper we will focus on the Ising model as a case study, it
being instructive and sharing all the main features discussed here. The goal is to study the limitations to Carnot derived in Theorem \ref{thm:general-results} from a quantitative perspective. We will show that the corrections to Carnot influence dramatically the feasibility of work extraction protocols and that one encounters a remarkably rich variety of behaviours. This ranges from situations where strong interactions make impossible to extract any work per particle at all (ferromagnetic) to the case of strong interactions enhancing the efficiency to Carnot (anti-ferromagnetic). We study work extraction from a many-body spin system with nearest neighbour Ising Hamiltonian, for which Eq.\ \eqref{eq:generalhamiltonian} 
 takes the form
\begin{align}
H_{{\rm I}, N}(t) &= -h(t) \sum_{j=1}^N \sigma_z^{(j)} - J \sum_{j=1}^N \sigma_z^{(j)} \sigma_z^{(j+1)}. \label{equ:def-H-Ising}
\end{align}
Here, $\sigma_z^{(j)}$ denotes the Pauli-Z-matrix acting at spin $j$ and $h(t)$ is a tunable magnetic field. Note, that we are assuming that the external field $h(t)$ is translational invariant and commutes with the interaction. Therefore, the unitaries $U,V$ in~\eqref{eq:mainbounddiagonal} are in fact identities. We thus obtain a set of operations less general as the one given by \eqref{eq:generalhamiltonian}, but at the same time it fairly represents a more realistic situation than applying different external fields to each microscopic subsystems. The
interaction strength $J$ is fixed and models the experimentally not controllable interaction between two neighbouring spins. 
We assume periodic boundary conditions, i.e., $\sigma_z^{(N+1)} = \sigma_z^{(1)}$. As the Hamiltonian~\eqref{equ:def-H-Ising} is
diagonal, it is equivalent to the classical Ising model Hamiltonian with 
$\sigma_z^{(j)} = \sigma^{(j)} \in \left\lbrace -1, 1\right\rbrace$ denoting spin up or down respectively.  
Thus, we will be able use the well known results about the partition function of the Ising model 
when studying work extraction. 
Using the partition function and the bounds of Theorem \ref{thm:general-results} one can compute the efficiency at maximum work density as function of $J$. This is shown in Fig.\ \ref{fig:maxefficiency} in the thermodynamic limit.

\begin{figure}
\includegraphics[width=.7\linewidth]{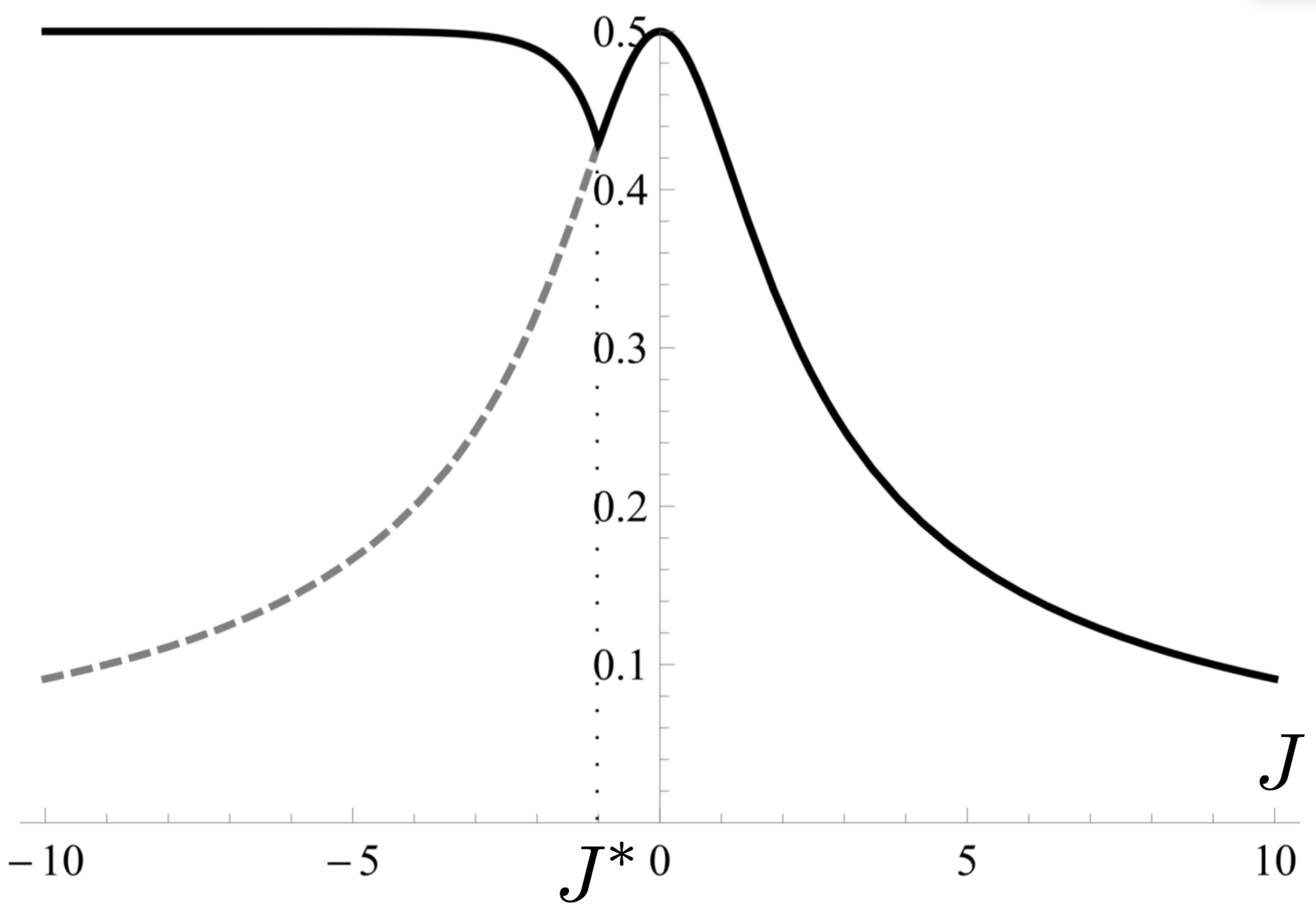}
\caption{Efficiency at maximum work density (black) for the Ising model as a function of $J$ in the thermodynamic limit for the parameters $\beta_h=0.5,\beta_c=1$. The gray dashed line shows a protocol that is independent of $J$.}
\label{fig:maxefficiency}
\end{figure}

There are three relevant aspects of the efficiency plotted in Fig.\ \ref{fig:maxefficiency} that we can 
derive analytically and that clearly exemplify the behaviour of the bound \eqref{eq:mainbound}: $i$) in the limit of strong anti-ferromagnetic interactions $J\rightarrow -\infty$ one can reach Carnot efficiency, $ii$) in the limit of strong ferromagnetic interactions $J\rightarrow \infty$ efficiency drops to zero and $iii$) at the value $J=J^*$, the efficiency changes its behaviour abruptly. All these three points can be explained analytically relying on considerations about the ground state degeneracy of the (anti)-ferromagnetic regimes and the resulting optimal protocols (see Appendix \ref{sec:ground state-deg}).  

Regarding point $i$) we construct in Appendix \ref{sec:anti-ferromagnetic} a simple protocol that reaches Carnot in the limit of $J\rightarrow -\infty$. Furthermore, this protocol is able to reach a finite work output per particle and cycle. This protocol works also in the thermodynamic limit, hence showing that one can build effective engines with macroscopic strongly correlated anti-ferromagnetic spin chains. 

The limit of $J\rightarrow \infty$ described in $ii$)  displays a strikingly unstable behaviour. On the one hand, formally it is possible to construct a work extraction protocol that achieves Carnot efficiency for any number of particles $N$. On the other hand, this protocol has to be considered unphysical, since it requires that the external magnetic fields are controlled with degree of precision that scales with $N$. That is, for any finite precision on the external parameter $h(t)$, one can find a sufficiently large $N$ so that the maximal efficiency vanishes. This is discussed in detail in Appendix \ref{sec:ferromagnetic} and in fact holds for more general classes of gapped ferromagnetic Hamiltonians. Since the precision on $h(t)$ is an intensive quantity, one can conclude that the strongly correlated ferromagnetic spin chains become useless as working mediums for engines in the thermodynamic limit. Indeed, we can also show that no finite work-density can be achieved in the limit $J\rightarrow \infty$. Furthermore, this unstable behaviour does not occur for the anti-ferromagnetic case $i$).
Lastly, in order to explain $iii$) we derive that the form of the optimal protocol changes abruptly at $J=J^*$, 
explaining that the efficiency is not smooth in this point in the thermodynamic limit. 
This is derived in Appendix \ref{sec:nonsmooth}.

\emph{Conclusion.} In this work, we have introduced the study of the performance of thermodynamic
engines in the presence of limited control on the thermodynamic operations. Our results complement a 
significant body of work in quantum thermodynamics concerned with the ultimate bounds on thermodynamics,
but in the absence of natural control restrictions. We derived corrections to Carnot efficiency as a result of having local control, which we introduce by considering engines driven by local external fields. Previous results from quantum control show that our general bounds are achievable for generic interactions. This opens new venues to incorporate in a 
comprehensive fashion the extensive literature on quantum control to thermodynamics.

It is also promising to investigate other possible sets of limited thermodynamic 
operations incorporating a notion of locality. This points to the possibility of developing formally a 
\emph{resource theory of locally restricted thermal operations}. Possible extensions of our formalism include more general and realistic thermal baths, as well as local unitaries instead of local external fields. Lastly, the corrections on the efficiency can be investigated for other systems than the 1D-Ising model. In particular, it would be 
interesting to understand the effects on the efficiency for systems displaying spontaneous 
magnetization for low temperatures.

\emph{Acknowledgements.} We acknowledge funding from the BMBF (Q.com), the  EU (RAQUEL, AQuS), the DFG (EI 519/7-1, CRC 183, GA 2184/2-1), the ERC (TAQ) and the Studienstiftung des Deutschen Volkes.

\newpage
\appendix

\section{General bound}
\subsection{Proof of Theorem \ref{thm:general-results}}
\label{sec:app:penalty}
Let us consider any protocol that makes a cycle with $n_h$ contacts of the form \eqref{eq:thermalcontact} with the bath at $\beta_h>0$, followed by $n_c$ contacts with the bath at $\beta_c>0$. The fact that the optimal protocol has to be of this form follows easily from the following considerations. 

Let us denote the initial Hamiltonian by $H^D$ and the Hamiltonian after the $n_h$-th contact with the bath at $\beta_h$ as $H^B$ (see Fig \ref{fig:carnotcyclequbit} a)). Let us also denote by $W^{D\rightarrow B}$ and $Q^{D\rightarrow B}$ the 
expected work and heat respectively obtained in this part of the protocol between $H^D$ and $H^B$. We recall from Theorem 1 in  Ref. \cite{Wilming2016} that the optimal value of $W^{D\rightarrow B}$ can be written as
\begin{equation}\label{eq:proof1}
W^{D\rightarrow B}\leq T_h D(\omega_{\beta_c}^D\|\omega_{\beta_h}^B) - T_h D(V\omega_{\beta_c}^DV^{\dagger} \| \omega_{\beta_h}^A ).
\end{equation}
There, it is also shown how this value of work can be achieved, namely, by performing an adiabatic operation of the form \eqref{eq:timeevolution} as $(\omega_D^C, H^D) \rightarrow (V \omega_D^C V^{\dagger}, H^A)$, followed by an isothermal path from $H^A$ to $H^B$. Due to energy conservation, this strategy which maximizes $W^{D \rightarrow B}$ minimizes also $Q^{D\rightarrow B}$. By using the correspondence $D(\rho||\omega_\beta(H))=\beta\tr((\rho-\omega_\beta(H))H)-(S(\rho)-S(\omega_\beta(H)))$ and the first law of thermodynamics, we hence obtain
\begin{equation}\label{eq:proof2}
Q^{D \rightarrow B} \geq T_h \left(\Delta S^{B,D}- D(V\omega_c^DV^\dagger||\omega_h^A)\right).
\end{equation}
Using an equivalent argument to the one leading to \eqref{eq:proof1}, one finds 
that the work $W^{B \rightarrow D}$ from $H^B$ to $H^D$ and where only the cold bath is employed is bounded by
\begin{equation}\label{eq:proof3}
W^{B\rightarrow C}\leq T_cD(\omega_{\beta_h}^B\|\omega_{\beta_c}^D) - T_c D(U\omega_{\beta_h}^BU^{\dagger} \| \omega_{\beta_c}^C ).
\end{equation}
It then amounts to simple algebra to compute that
\begin{eqnarray}
\eta &=& \frac{W^{D\rightarrow B}+W^{B\rightarrow C}}{Q^{D \rightarrow B}}\nonumber \\
\label{eq:finaleqproof} &\leq& 1 - \dfrac{T_c}{T_h} \left(\dfrac{\Delta S^{B,D} + D(U\omega_{h}^{B}U^\dagger||\omega_{c}^{C}) }{\Delta S^{B,D} -  D(V\omega_{c}^{D}V^\dagger||\omega_{h}^{A})}\right).
\end{eqnarray}
The correction terms  $D(U\omega_{h}^{B}U^\dagger||\omega_{c}^{C})$ and $D(V\omega_{c}^{D}V^\dagger||\omega_{h}^{A})$ emerge as result of the unavoidable dissipation when switching from the hot to the cold bath 
or vice versa. Hence, it is easy to see that any other protocol that would include longer sequences of switches between the baths (see for instance Fig.\ \ref{fig:carnotcyclequbit} b)) would contain more dissipation terms that would diminish the efficiency even further. Hence, \eqref{eq:finaleqproof} provides the final bound and proves the
validity of Theorem \ref{thm:general-results}.

\subsection{Inequality regarding relative entropy}
\label{sec:app:relent}
We will now turn to proving the inequality 
\begin{equation}
D(U\omega_{\beta_1}(H_1)U^\dagger || \omega_{\beta_2}(H_2)) \geq D^\downarrow(\omega_{\beta_1}(H_1)||\omega_{\beta_2}(H_2)). 
\end{equation}
To do that we use the correspondence with the free energy and write the left hand side as
\begin{align}
&\beta_2\tr\left(H_2U\omega_{\beta_1}(H_1)U^\dagger\right) - S(U\omega_{\beta_1}U^\dagger) + \log Z_{\beta_2}(H_2)\nonumber\\
&= \beta_2\tr\left(H_2U\omega_{\beta_1}(H_1)U^\dagger\right) - S(\omega_{\beta_1}) + \log Z_{\beta_2}(H_2).
\end{align}
The r.h.s. corresponds to the case where $U$ is chosen such that $U\omega_{\beta_1}(H_1)U^\dagger$ is diagonal in the basis of $H_2$ with larger eigenvalues corresponding to smaller energies. It is thus the corresponds to the choice of $U$ that minimizes the energy. Hence the l.h.s. is always as least as big as the r.h.s.

\section{Ising model}
\subsection{Ground-state degeneracy}
\label{sec:ground state-deg}
In this section, we discuss the ground state degeneracy of the nearest-neighbour Ising model with finite interactions and 
magnetic field. The goal is to explain the finite entropy density, and hence finite work-density, that can be reached
for any interaction strength in case of anti-ferromagnetic couplings. We will therefore restrict to this scenario. First, for any fixed temperature and zero magnetic field the thermal state converges to
the thermal ground state as the interaction strength $J$ is made large 
in absolute value. Similarly, if the magnetic field has a strength $h = kJ$, the thermal state approximates the ground state with unit interaction
strength and magnetic field of strength $k$. For large interaction strengths, 
we are thus interested in the entropy density at zero temperature in the case $J=1$ and $h=k$. We therefore have to count the ground state degeneracy in such a situation. Clearly, for $k\rightarrow \infty$ the ground state degeneracy is finite, i.e., independent of the system size. The same holds for $k=0$ and an even number of spins while for $k=0$ and an odd number of spins, the ground state degeneracy scales linearly with the system size. In all cases, the entropy density vanishes in the thermodynamic limit. We will now show that there are finite values of $k$ such that the ground state degeneracy is exponentially large in the system size, so that the entropy density remains finite in the thermodynamic limit. 

To see this, suppose that we set $k$ to be given by the number of nearest neighbours of a single site in the lattice. For a square lattice we thus have $k=2d$, with $d$ the spatial dimension. It is easy to convince oneself that one of the 
ground states is given by $\ket{\!\uparrow}^{\otimes N}$, where $\ket{\!\uparrow}$ denotes the spin-up state vector in the direction of the magnetic field.  It has energy 
\begin{equation}
dN-2dN = -dN.
\end{equation} 

Now suppose that we flip one of the spins. The increase of energy due to the magnetic field is given by $2k=4d$, and the interaction energy of each of the $k$ neighbours is reduced by $2J=2$. The net change of energy is 
found to be $2k-2k=0$ and we therefore have produced a new ground state. If we would flip a neighbour of the flipped spin, the energy would increase. However if we flip a next-nearest neighbour of the flipped spin, we obtain a further ground state. Iterating in the same way we get new ground states until we have flipped half of the spins in the lattice. However, we can always decide to leave out one of the nearest neighbours. In other words, we can decide for each of the $N/2$ next-nearest neighbours, whether we want to flip it, providing us with the lower bound of the ground state entropy
\begin{align}
S_{\mathrm{G}} \geq \log 2^{N/2} = N/2 \log 2.
\end{align}
Clearly, there are many more states with the same energy, for example the one corresponding to
$\ket{\uparrow ,\downarrow ,\uparrow ,\uparrow, \downarrow ,\uparrow,\cdots}$, which does not fit into the scheme described above. Nevertheless, our argument is sufficient to show that for $h=2JZ$, with $Z$ the coordination number of the lattice, we get a finite entropy-density in the zero-temperature state in the thermodynamic limit (and hence also at any positive finite-temperature). 

\subsection{Achievability of Carnot efficiency in the anti-ferromagnet}\label{sec:anti-ferromagnetic}
In this section, we show that Carnot efficiency is achievable at finite work per particle in the thermodynamic limit as $J\rightarrow -\infty$, that is, in the extremely anti-ferromagnetic case. Recall that the work-density in the thermodynamic limit is given by
\begin{align}
w(J) &= \lim_{n\rightarrow \infty} \frac{1}{n}(T_h-T_c)\Delta S^{B,D}  \\
&\quad - \lim_{n\rightarrow\infty} \frac{1}{n}\left( T_hD(\omega_{\beta_c}^{H_D}||\omega_{\beta_h}^{H_A}) + T_c D(\omega_{\beta_h}^{H_B}||\omega_{\beta_c}^{H_C})\right),\nonumber
\end{align}
with 
\begin{align}
\Delta S^{B,D} = S(\omega_{\beta_h}^{H_B})-S(\omega_{\beta_c}^{H_D}).
\end{align}
Here, the Hamiltonians $H_{A,B,C,D}$ correspond to different magnetic fields $h_{A,B,C,D}$ at the different stages of the protocol. 
The efficiency is given by
\begin{align}
\eta(J) = 1- \frac{T_c}{T_h}\frac{\Delta S^{B,D}  + D(\omega_{\beta_h}^{H_B}||\omega_{\beta_c}^{H_C}) }{\Delta S^{B,D} -  D(\omega_{\beta_c}^{H_D}||\omega_{\beta_h}^{H_A})}.
\end{align}
In the formula for the efficiency we have, for notational reasons, omitted the thermodynamic limit $n\rightarrow \infty$. 
Carnot efficiency is reached only if the two correction-terms involving relative entropies vanish \footnote{Indeed, this is not enough, as we will see in the ferromagnetic case}. The work-density depends on $J$ and the external fields through the Hamiltonians $H_{A,B,C,D}$. In the limit $J\rightarrow -\infty$, the thermal states $\omega_{\beta}^H$ converge to ground state projectors. We will now choose $h_C=h_B= 2J$ and $h_D=h_A\gg J\rightarrow \infty$ (compare 
this with the results of Section \ref{sec:ground state-deg}). It is then clear that the relative entropy density including Hamiltonians $H_A$ and $H_D$ vanishes as it compares the state with all spins up with itself. 

Similarly, in the limit $J\rightarrow - \infty$, the relative entropy involving $H_B=H_C$ vanishes, 
because the two states at different temperatures both converge to the ground state of a Hamiltonian with a finite $J<0$. 
Note, however, that in this case the two states converge to the ground state of a model in which $h=2J$ and hence have finite entropy-density. 

Recall that the anti-ferromagnetic ground state at infinite external field has a unique ground state, whereas for $h=2J$, the ground state space is exponentially degenerate in the system size and therefore has a finite entropy density (see section~\ref{sec:ground state-deg}). Combining with the previous considerations, we have
\begin{align}
\lim_{J\rightarrow -\infty} w(J) &=  \lim_{J\rightarrow -\infty}\lim_{n\rightarrow \infty} \frac{1}{n}(T_h-T_c)\left(\Delta S^{B,D}\right)\nonumber \\
&= \lim_{J\rightarrow -\infty}\lim_{n\rightarrow \infty} \frac{1}{n}(T_h-T_c)S(\omega_{\beta_h}^{H_B})\nonumber\\
&\geq (T_h-T_c) \frac{1}{2}\log(2),  
\end{align}
which is consistent with our numerics and the fact that we are in fact neglecting a number of ground states that is exponential in the system size in our estimate of the ground state degeneracy. 

Combining the fact that the entropy density $\Delta S^{B,D}/n$ remains finite with the observation 
that the corrections vanish, we obtain
\begin{align}
\lim_J \eta(J) = 1- \lim_J \frac{T_c}{T_h} \frac{\Delta S^{B,D}}{\Delta S^{B,D}} = 1-\frac{T_c}{T_h}.
\end{align}

\subsection{Ferromagnetic case: vanishing efficiency}\label{sec:ferromagnetic}
In the ferromagnetic case, the optimal protocol in terms of work-density is to choose $h_A = h_D \rightarrow \infty$ and $h_B=h_C=0$ in the sense that $h_A,h_D \gg J$ for any choice of $J$. This is proven in Section~\ref{sec:nonsmooth}.
Although this is the optimal protocol, it achieves zero work-density in the strong coupling limit $J\rightarrow \infty$. This is due to the fact that the ground state degeneracy vanishes in the ferromagnetic case for any choice of magnetic field and for any finite $J>0$. 
We will now argue that this optimal protocol not only has zero work-yield, but also vanishing efficiency as $J\rightarrow \infty$ despite the fact that both penalty terms become zero as $J\rightarrow \infty$. 

For this discussion, it is useful to consider not the actual optimal protocol, but allow the fields $h_B,h_C$ to depart 
from $0$ by some amount $\epsilon>0$, representing the precision with which we can control the magnetic field in the experiment.  In the following it is also useful to keep in mind that the temperature-difference $\Delta T$ is fixed, and hence $\Delta T/ J \rightarrow 0$, so that $\Delta T$ can be seen as arbitrarily small as $J$ goes to infinity.
Now first notice that since $h_A,h_D\rightarrow \infty$, one of the penalty terms in the efficiency can effectively be set to zero just as in the anti-ferromagnetic case. Furthermore $S(\omega^{H_D}_{\beta_c})$ vanishes, so that $\Delta S^{B,D}$ can be replaced by $S(\omega^{H_B}_{\beta_h})=: S^B_h$. The efficiency then takes the form
\begin{align}
\lim_{J}\eta(J) &= \lim_J \left(1- \frac{T_c}{T_h}\frac{S^B_h + D(\omega_{\beta_h}^{H_B}||\omega_{\beta_c}^{H_B}) }{S^B_h}\right).
\end{align}
We thus have to show that the second term converges to unity. 
To do that, first write the relative entropy as a difference of free energies and cancel entropic terms,
\begin{align}
\frac{T_c}{T_h}\frac{S^B_h + D(\omega_{\beta_h}^{H_B}||\omega_{\beta_c}^{H_B}) }{S^B_h}  
&= 
\frac{T_c}{T_h}\frac{S^B_h + \frac{1}{T_c}(E^B(T_h) - T_c S_h - F^B_c)}{S^B_h} \nonumber \\
&= 
\frac{1}{T_h}\frac{E^B(T_h) - F^B_c}{S^B_h} \nonumber \\
&= 
1 + \frac{1}{T_h}\frac{F^B_h - F^B_c}{S^B_h} .
\end{align}
Here, we have introduced the internal energy with respect to the Hamiltonian $H_B$ and at temperature $T$ as $E^B(T)$ and the thermal free energies at the temperatures $T_h$ and $T_c$ as $F^B_{h,c}$, respectively. They fulfill
\begin{align}
F^B_{h,c} = J f(T_{c,h}/J), \quad f(T) := - T \log Z_T(1,\epsilon/J),
\end{align}
where $Z_T(J,h)$ is the partition function of the model at temperature $T$, interaction 
strength $J$ and magnetic field strength $h$.
We can then expand $F^B_{h}$ in the small parameter $(T_h-T_c)/J=:\Delta T/J$ around $T_c/J$, to obtain
\begin{align}
F^B_h - F^B_c &= J\left(\left.\frac{\partial f(x)}{\partial x}\right|_{x=T_c/J}\frac{T_h-T_c}{J} + O(1/J^2)\right) \nonumber \\
&= - S^B_c (T_h-T_c) + O(1/J). 
\end{align}
Using this result,  then yields for the efficiency
\begin{align}
\lim_j \eta(J) \nonumber &= \lim_{J\rightarrow \infty }
\left( \frac{S^B_c}{S^B_h}\frac{\Delta T}{T_h} + O(1/J)\right) \nonumber \\
&= \frac{\Delta T}{T_h}\frac{\log(1+\e^{ - \beta_c \epsilon N})}{\log(1+\e^{- \beta_h \epsilon N})}\leq  \frac{\Delta T}{T_h},
\end{align}
where the last line is proven in the following section. We thus see that in any finite system it is formally possible to achieve Carnot-efficiency in the limit $J\rightarrow \infty$ if we can get $\epsilon$ exactly to zero. However, as the system size increases, to achieve a given efficiency, the precision has to scale like $1/N$. If we have a fixed precision, the efficiency goes to zero exponentially. We thus conclude that it is physically infeasible to achieve finite efficiency in the thermodynamic limit.
In a finite system, in contrast, the efficiency can be made as close to Carnot efficiency as $J\rightarrow \infty$ by increasing the precision. This is illustrated in Fig.~\ref{fig:precision} for a system of six spins.

\begin{figure}
\includegraphics[width=0.68\columnwidth]{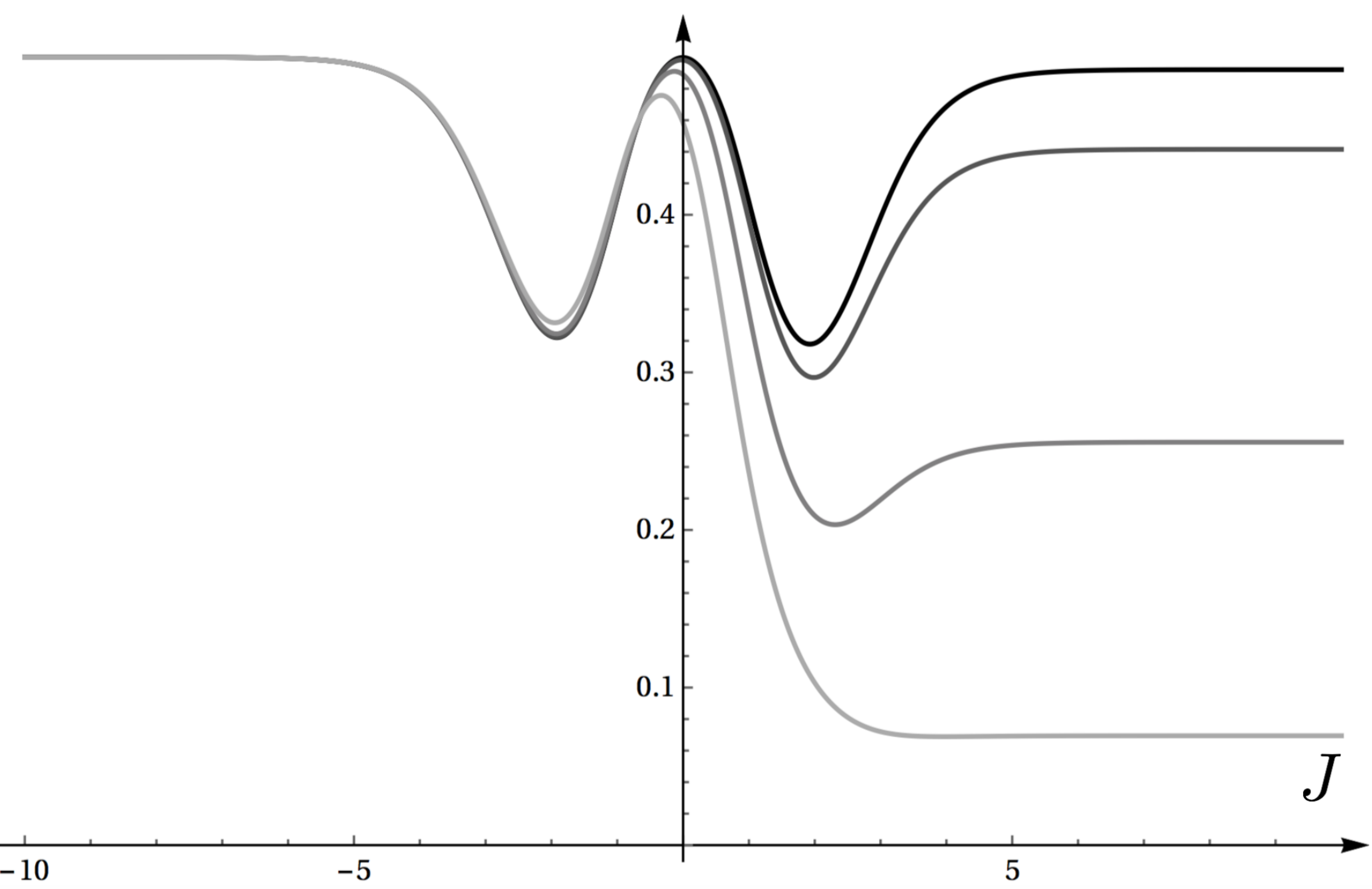}
\caption{Efficiency at maximum work for a system of six spins as a function of $J$ and different imprecisions on the external fields $\epsilon=0.05,0.1,0.25,0.5$ (black to light grey). For larger system sizes, the local minimum on the ferromagnetic side $J>0$ moves to larger values of $J$ and the value at $J=\infty$ decreases exponentially with the system size for any fixed precision. }
\label{fig:precision}
\end{figure}

\subsection{Ratios of entropies}
We will now discuss the ratios of entropies appearing in the ferromagnetic strong-coupling setting, first in the
situation in which we can assume that we may set the magnetic field exactly to zero. We will then turn to 
elaborating on the case of a small but finite external field. We will keep the discussion as general as
possible, that is, we will use hardly any specific properties of the Ising model apart from its ground 
state properties and the Hamiltonian gap. 
\begin{lemma}[Entropy ratios]
Let $H\geq 0$ be a Hamiltonian on a finite-dimensional system with at least two different energy-levels. Consider two inverse temperatures $\beta_c > \beta_h\geq 0$.
Then
\begin{align}
\lim_{J\rightarrow +\infty}\frac{S(\omega_{c}(J))}{S(\omega_{h}(J))} = \begin{cases}
0,\quad g(E_0) = 1,\\
1,\quad g(E_0) \geq 2.
\end{cases},
\end{align}
where $g(E_0)$ is the ground state degeneracy.
\end{lemma}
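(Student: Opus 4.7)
The plan is to diagonalize $H$ and track, as $J\to\infty$, how the Gibbs states $\omega_\beta(J)=\e^{-\beta J H}/Z_\beta(J)$ concentrate on the ground space, then compare the two entropies term by term. Let $0=E_0<E_1<\dots$ be the distinct eigenvalues of $H$ with degeneracies $g_0,g_1,\dots$. By hypothesis at least $E_1$ exists and is strictly positive, so $\Delta:=E_1-E_0=E_1>0$. The partition function reads $Z_\beta(J)=g_0+g_1\e^{-\beta J \Delta}+O(\e^{-\beta J E_2})$ and the entropy can be written as
\begin{align}
S(\omega_\beta(J)) = \beta J\, \langle H\rangle_{\beta,J} + \ln Z_\beta(J),
\end{align}
with $\langle H\rangle_{\beta,J}=\sum_i g_i E_i \e^{-\beta J E_i}/Z_\beta(J)$.

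For the case $g_0\geq 2$, I would note that $Z_\beta(J)\to g_0$ and $\langle H\rangle_{\beta,J}\to 0$ exponentially (the leading correction is $g_1 E_1\e^{-\beta J\Delta}/g_0$, which vanishes faster than any power of $1/(\beta J)$ after multiplication by $\beta J$). Hence $S(\omega_\beta(J))\to \ln g_0$ for both $\beta\in\{\beta_c,\beta_h\}$. Since $g_0\geq 2$ gives $\ln g_0>0$, the ratio is well defined in the limit and equals $1$.

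For the case $g_0=1$, the analysis is more delicate because both entropies vanish and we are computing a ratio of infinitesimals. I would expand to leading order in $\e^{-\beta J\Delta}$: writing $p_0=1/Z_\beta(J)=1-g_1\e^{-\beta J\Delta}+o(\e^{-\beta J\Delta})$ and $p_i=\e^{-\beta J E_i}/Z_\beta(J)$ for $i\geq 1$, one finds
\begin{align}
S(\omega_\beta(J))
 = g_1\,\bigl(1+\beta J\Delta\bigr)\e^{-\beta J\Delta} + o\!\bigl(\e^{-\beta J\Delta}\bigr),
\end{align}
because the ground-state contribution $-p_0\ln p_0$ gives the $g_1\e^{-\beta J\Delta}$ piece and the first-excited contribution gives $g_1 \beta J\Delta\,\e^{-\beta J\Delta}$; further excited levels enter only at order $\e^{-\beta J E_2}$. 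Taking the ratio, the prefactors $\beta_c J\Delta/\beta_h J\Delta\to \beta_c/\beta_h$ and the exponentials give $\e^{-(\beta_c-\beta_h)J\Delta}\to 0$ since $\beta_c>\beta_h$, so $S(\omega_c(J))/S(\omega_h(J))\to 0$.

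The only nonroutine point is keeping the asymptotic expansion honest in the case $g_0=1$: one must ensure the $\beta J\Delta\,\e^{-\beta J\Delta}$ term genuinely dominates the other contributions and that higher-excited levels with $E_i\geq E_2>E_1$ contribute corrections that are exponentially small compared to the leading term, so that they can be absorbed in $o(\e^{-\beta J\Delta})$. This is immediate once one factors out $\e^{-\beta J\Delta}$ and observes that the remaining factors are polynomial in $J$, which is dominated by any positive-exponent exponential $\e^{-(E_i-E_1)\beta J}$ for $i\geq 2$. Everything else is bookkeeping around the two explicit limits.
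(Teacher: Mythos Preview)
Your proof is correct and takes a somewhat different route from the paper. Both begin with the identity $S=\beta J\langle H\rangle+\log Z$, but the paper then invokes the elementary inequality $\tfrac{a+b}{c+d}\leq\tfrac{a}{c}+\tfrac{b}{d}$ (for positive $c,d$) to bound the entropy ratio by the sum of an energy-ratio $\beta_c\langle H\rangle_c/(\beta_h\langle H\rangle_h)$ and a partition-ratio $\log Z_c/\log Z_h$, and then shows each piece separately tends to~$0$ (resp.\ $1$) via explicit exponential estimates on the partition sums. You instead compute the leading asymptotic $S\sim g_1(1+\beta J\Delta)\e^{-\beta J\Delta}$ directly when $g_0=1$ and read off the limit of the ratio. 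Your method is arguably more transparent, since it pins down the exact leading behaviour; the paper's inequality route avoids that computation at the price of a few extra bounding steps, and for $g_0\geq2$ it only yields the upper bound~$1$ from the split, so one still falls back on the direct limit $S_{c},S_{h}\to\log g_0$ that you use from the start.

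One small remark: your expansion in the $g_0=1$ case tacitly assumes $\beta_h>0$, so that $\e^{-\beta_h J\Delta}$ is genuinely small. The boundary case $\beta_h=0$ is trivial---then $S(\omega_h)=\log(\dim)$ is a positive constant independent of $J$ while $S(\omega_c)\to0$---but you should state it separately. The paper's argument has the same implicit assumption, since its energy-ratio term divides by $\beta_h$.
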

\begin{proof}
Here, we have used the Gibbs states
\begin{align}
\omega_{c/h}(J) = \frac{\e^{-\beta_{c/h} J H}}{Z_{c/h}(J)}.
\end{align}
Without loss of generality assume $H=\sum_{i=0}^{n-1} E_i P_i \geq 0$ with eigen-energies $E_i$, eigen-projectors $P_i$, ground state energy $E_0=0$ and degeneracies $g(E_i) = \tr(P_i)$. Then $Z_{c/h}(J) > g(E_0) \geq 1$ and $\lim_{J\rightarrow \infty} Z_{c/h}(J)=g(E_0)$. A simple calculation gives
\begin{align}
\frac{S(\omega_{\beta_c}(JH))}{S(\omega_{\beta_h}(JH))} &= \frac{\beta_c J \tr(\omega_{c}H) + \log Z_c(J)}{\beta_h J \tr(\omega_h H) + \log Z_h} \\
&\leq \frac{\beta_c \tr(\omega_{c}H)}{\beta_h  \tr(\omega_h H)} + \frac{\log Z_c(J)}{\log Z_h (J)},\nonumber§
\end{align}
where we have used that both terms in the denominator are positive. We will now show that both terms vanish as $J\rightarrow \infty$. Starting with the first term, we get
\begin{align}
\frac{\beta_c \tr(\omega_{c}H)}{\beta_h J \tr(\omega_h H)} &\leq \frac{Z_h(J) \beta_c \tr(\omega_{c}H)}{J \e^{-\beta_h J E_1} g(E_1) \beta_h E_1} \\
&= \frac{ Z_h(J)  \sum_i \e^{-\beta_c J E_i} \beta_c E_i g(E_i)} {Z_c(J) \e^{-\beta_h J E_1} g(E_1) \beta_h E_1} \nonumber 
\\
&\leq  \frac{ Z_h(J) \e^{-\beta_c J E_1} \sum_i  \beta_c E_i g(E_i)} { Z_c(J)\e^{-\beta_h J E_1} g(E_1) \beta_h E_1}  \nonumber\\
&\leq \e^{-JE_1 (\beta_c - \beta_h)}Z_h(J) K , \nonumber
\end{align}
where $K>0$ is some  constant, independent of $J$. Thus, as $J\rightarrow \infty$, the term goes to zero exponentially since $Z_h(J)\rightarrow g(E_0)$.
Let us now consider the second term. 
In the case $g(E_0) \geq 2$, we have $\lim_{J}\log Z_{c/h}(J) = \log g(E_0) \geq \log 2$ and the ratio converges to $1$. 
Let us, therefore, assume that $g(E_0)=1$. We first employ the fact that the logarithm is monotone increasing to truncate the partition sum in the denominator. This yields
\begin{align}
\frac{\log Z_c(J)}{\log Z_h (J)} &\leq \frac{\log Z_c(J)}{\log g(E_0)+\log\left(1+ \frac{g(E_1)}{g(E_0)} \e^{-\beta_h J E_1} \right)} \\
&= \frac{\log Z_c(J)}{\log\left(1+ \overline{g}_1 \e^{-\beta_h J E_1} \right)},\nonumber
\end{align}
where we have written $\overline{g}_i := g(E_i)/g(E_0)$. We now use that $E_i \geq E_0$ for $i\geq 1$ and the monotonicity of the logarithm again to upper bound the numerator as
\begin{align}
\frac{\log Z_c(J)}{\log Z_h (J)} &\leq \frac{\log g(E_0) + \log\left(1 + \e^{-\beta_c J E_1} \sum_i \overline{g}_i\right)}{\log\left(1+ \overline{g}_1 \e^{-\beta_h J E_1} \right)}\nonumber \\
&= \frac{\log\left(1 + \e^{-\beta_c J E_1} \sum_i \overline{g}_i\right)}{\log\left(1+ \e^{-\beta_h J E_1} \overline{g}_1\right)}\\
&\leq \frac{\e^{-\beta_c J E_1} \sum_i \overline{g}_i}{\log\left(1+ \e^{-\beta_h J E_1} \overline{g}_1\right)},\nonumber
\end{align}
where we have used $\log x \leq x-1$. For large $x$ we have
\begin{align}
\log(1+ \e^{-a x}C) \simeq \e^{- ax}C.
\end{align}
Hence, we finally obtain 
\begin{align}
\lim_J\frac{\log Z_c(J)}{\log Z_h (J)} \leq \lim_J \frac{\sum_i \overline{g}_i}{\overline{g}_1}\e^{-(\beta_c-\beta_h) J E_1} = 0.
\end{align}
\end{proof}

Now consider the Hamiltonian 
\begin{equation}
	H(J) = H + \frac{B}{J} V, 
\end{equation}
where $H$ is a local Hamiltonian on $N$ sites, has gap of order unity and a two-fold degenerate ground state. 
Furthermore, suppose that $V$ is also a local Hamiltonian which merely splits the ground state degeneracy of $H$ by an amount $N B/J$ for large enough $J$, but does not change the order of the gap. Then by a similar reasoning as in the previous lemma we obtain
\begin{align}
\overline{\eta}(B,N) :=\lim_{J\rightarrow \infty} \frac{S(\omega_{c}(J))}{S(\omega_{h}(J))} = \frac{\log(1+\e^{- \beta_c B N})}{\log(1+\e^{- \beta_h B N})} < 1,
\end{align}
where now
\begin{align}
\omega_{c/h}(J) = \frac{\e^{-\beta_{c/h} J H(J)}}{Z_{c/h}(J)}.
\end{align}

\subsection{Optimal protocols in the thermodynamic limit}\label{sec:nonsmooth}

In this section, we continue discussing the example of the Ising model and show that both the optimal work-density and the efficiency at optimal work-density are not smooth at $J=J^*$. We will work directly in the thermodynamic limit, where the free energy density takes the well-known form
\begin{align}
f(\beta,J,h) = - \frac{1}{\beta}\log&\left(\e^{\beta J}\cosh(\beta h)\right.\\ &\quad\left.+
({\e^{2\beta J}\sinh(\beta h)^2+\e^{-2\beta J}})^{1/2}\right).\nonumber
\end{align}
It is clear from the discussion of the main text  that in order 
to optimize the work-density, we have to maximize the entropy-density as a function of the magnetic field for a given $J$. Here, we are interested in the anti-ferromagnetic regime, i.e., $J<0$. 
The entropy-density can be calculated from the above expression explicitly by the usual formula
\begin{align}
s(\beta,J,h) = -\frac{\partial}{\partial T}f(1/T,J,h),
\end{align}
resulting, however, in a fairly complicated expression. To find an extremum of the entropy-density as a function of $h$, we take the corresponding derivative. The result is
\begin{widetext}
\begin{align}
\frac{\partial s(\beta,J,h)}{\partial h}= - \beta^2\frac{\e^{\beta J}\left(h \cosh(\beta h)+2 J \sinh(\beta h)\right)}{\left({\e^{-2\beta J}+\e^{2\beta J}\sinh(\beta h)^2}\right)^{1/2}\left(1+\e^{4\beta J}\sinh(\beta h)^2\right)}. 
\end{align}
\end{widetext}
For this expression to vanish, we either need $h\rightarrow \infty$, so that the denumerator diverges, or that the numerator vanishes. The former case corresponds to vanishing entropy-density, as it corresponds to a magnetic field that is so strong that it projects all spins in the same direction. We thus consider the second case in which we have to find functions 
with the property that \begin{align}
h(J) \cosh(\beta h(J))+2J \sinh(\beta h(J)) = 0.
\end{align}

\begin{figure}
\includegraphics[width=.8\linewidth]{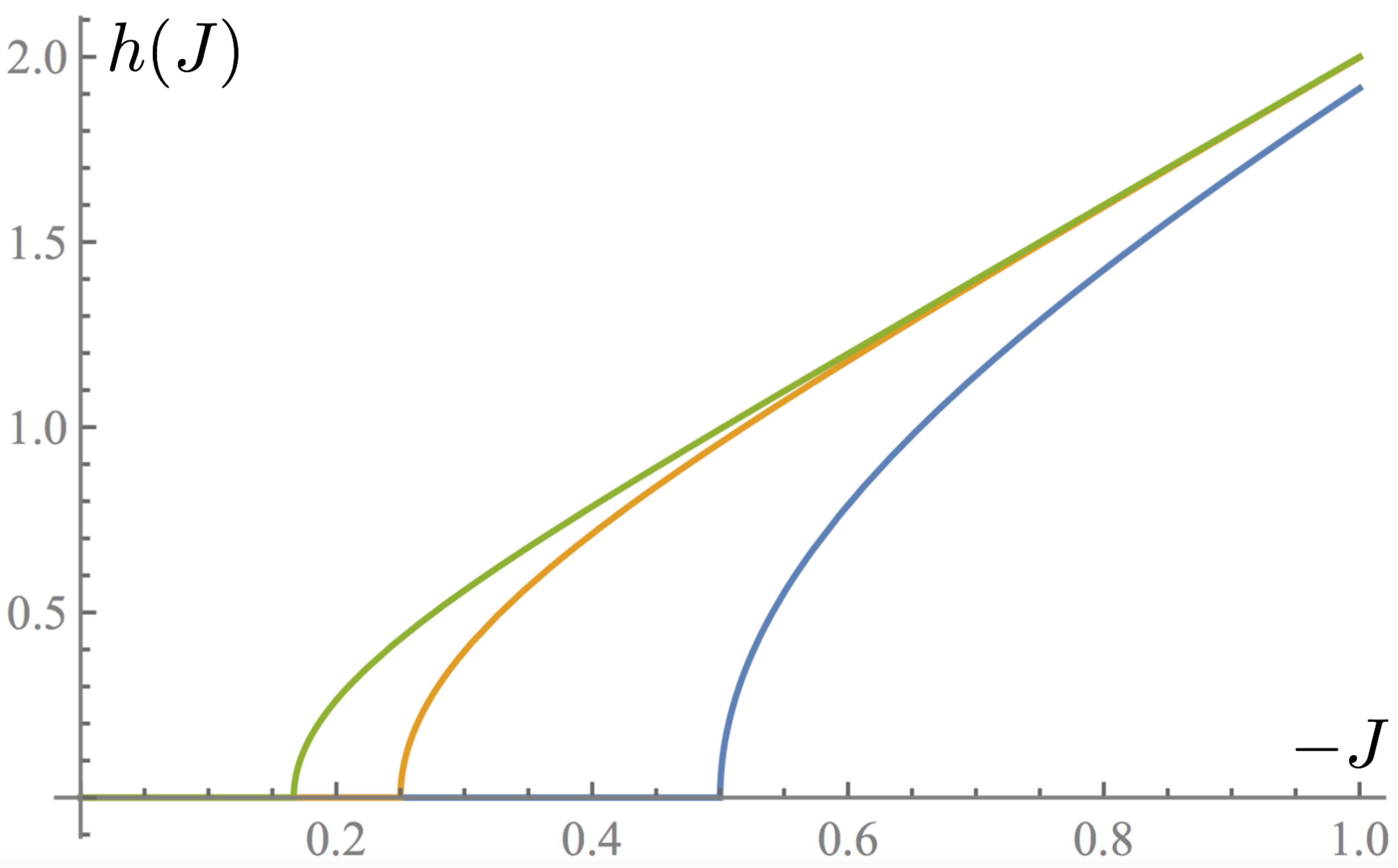}
\caption{The optimal magnetic field as a function of $J$ for inverse temperatures $\beta=1$ (blue), $\beta=2$ (orange) and $\beta=3$ (green). It is clearly visible that at the critical point $1/(2\beta)$ the function is not analytic, similarly to a second order phase transition.}
\label{fig:optimalmagneticfield}
\end{figure}

Clearly, one solution to this equation is given by $h_1(J)=0$. However, there can be more solutions.
Remembering that $J<0$, we can simplify this expression to
\begin{align}
h(J) = 2|J|\tanh(\beta h(J)).
\end{align}
The existence of a second solution $h_2$ now follows from the fact that $h\mapsto \tanh(\beta h)$ is concave for $h>0$ and convex for $h<0$, with derivative at the origin given by $\beta>0$. Thus as long as $2|J|\beta > 1$, or, 
in other words, 
\begin{equation}
|J| > \frac{1}{2}k_B T, 
\end{equation}
there exists a second solution to the equation. It is also clear that this second solution only exists for $J<0$. We have plotted the optimal magnetic field in Fig.\ \ref{fig:optimalmagneticfield}. It is clearly not continuously differentiable. 

Finally, we note that the solution $h_2$ always provides a larger entropy than the trivial solution $h_1(J)=0$. From the discussion of the ground state entropy in the anti-ferromagnetic case as a function of $h$, 
we can guess that for large $\beta$, the optimal magnetic field is given by $2|J|$. Indeed, we have
\begin{align}
2|J|\left(1-\tanh(\beta 2|J|)\right) \rightarrow 0 
\end{align}
as $ |J|\rightarrow \infty$,
showing that for very strong anti-ferromagnetic interactions $h(J)=2|J|$ is arbitrary close to the optimal value $h_2$. \clearpage

\end{document}